\title{Efficient Almost-Egalitarian Allocation of Goods and Bads}
\author{Israel Jacobovich, Ariel University, israel.jacobowi@gmail.com
\\*
Erel Segal-Halevi, Ariel University,  erelsgl@gmail.com}
\date{June 2023}
\newcommand{\er}[1]{\textcolor{blue}{#1}}
\newcommand{\erel}[1]{\er{(Erel says: #1)}}
\newtheorem{theorem}{Theorem}
\newtheorem{lemma}{Lemma}
\newtheorem{corollary}{Corollary}
\begin{document}

\maketitle

\begin{abstract}
We consider the allocation of indivisible 
objects among agents with different valuations, which can be positive or negative.
An *egalitarian* allocation is an allocation that maximizes the smallest value given to an agent; finding such an allocation is NP-hard. 
We present a simple polynomial-time algorithm that finds an allocation that is Pareto-efficient and *almost-egalitarian*: each agent's value is at least his value in an egalitarian allocation, minus the absolute value of a single object.
The main tool is an algorithm for rounding a fractional allocation to a discrete allocation, by which each agent loses at most one good or gains at most one chore. Our algorithm generalizes and simplifies three previous algorithms.

We discuss several aspects and observations about the algorithm and the problem at hand that open doors for efficient and robust implementations.
\end{abstract}
\newpage

\section{Introduction}
Our goal is to allocate some objects fairly among some agents. The set of objects may contain both \emph{goods} (objects with a positive value), \emph{neutrals} (objects with no value) and \emph{bads} (objects with a negative value; also called \emph{chores}). Different agents may assign different values to the same objects; they may even disagree on whether an object is a good, neutral or a chore. For example, patrolling around the camp at night is usually seen as a chore, but may be considered a good for a soldier who likes to see the sunrise.

We assume that the agents' valuations are additive: when they get a set of objects, their total value is the sum of values they assign to the objects.
In order to be as fair as possible, we would like to allocate the objects such that the smallest total value given to a single agent is as large as possible; such an allocation is called \emph{egalitarian}.

If the objects are \emph{divisible}, then we may find a \emph{fractional allocation} --- an allocation in which fractions of the same object may be assigned to different agents. A fractional egalitarian allocation can easily be found by solving a linear program. 
But when the objects are \emph{indivisible}, we must find a \emph{discrete allocation} --- an allocation in which each object must be given entirely to a single agent. Finding a discrete egalitarian allocation is NP-hard even for two agents, by reduction from the \textsc{Partition} problem. 
Therefore, we aim to find an allocation that is \emph{egalitarian up to one object}. This means that, for each agent $i$, if a single chore is removed from $i$'s bundle, or a single good is added to $i$'s bundle, then $i$'s value will be at least as high as his value in some egalitarian allocation.
In other words, $i$'s value is almost as high as the egalitarian value --- the difference is at most the value of a single object.
In fact, our fairness guarantee is somewhat stronger: after removing(adding) a single chore(good), $i$'s value will be at least as high as his value in some \emph{fractional} egalitarian allocation. 
Moreover, the allocation is \emph{fractionally Pareto-optimal (fPO)}, which means that no other allocation, whether discrete or fractional, is at least as good for all agents, and strictly better for some agents. 

Our algorithm operates in three steps, each of which explained in detail below.

\begin{enumerate}
\item  Find a fractional allocation that is egalitarian and fPO (e.g. using linear programming). See section\ref{sub:find-fractional}.

\item Construct the \emph{consumption graph} of the fractional allocation --- a graph in which each object is linked to all agents who receive a positive fraction of it.
Modify the allocation such that its consumption graph becomes a forest (i.e. has no cycles).
See section\ref{sub:make-forest}.

\item Using the cycle-free graph, construct a discrete allocation; the construction ensures that each agent loses at most one good, or gains at most one chore, over his bundle in the fractional allocation. It further guarantees that the allocation remains fPO.
See section\ref{sub:round-allocation}.

\end{enumerate}

Our algorithm extends and simplifies three previously-published algorithms: \cite{lenstra1990} presented a similar algorithm for the problem known as \emph{unrelated machine scheduling}, which is equivalent to almost-egalitarian allocation of only chores; 
\cite{Ivona20005} adapted their algorithm to almost-egalitarian allocation of only goods; 
\cite{Aziz2020} presented an algorithm for almost-proportional allocation of both goods and chores,%
\footnote{
A \emph{proportional allocation} is an allocation in which each agent receives at least $1/n$ of his value for the set of all objects.
A proportional fractional allocation can be trivially computed by giving each agent $1/n$ of every good.
Using this trivial allocation in Step 1 of our algorithm, instead of the egalitarian allocation, yields an allocation that is fPO and almost-proportional (also known as PROP1).
}
but their rounding algorithm is more complex than ours. See Section \ref{sec:related} for a more detailed comparison to previous work.

Our algorithm can be adapted to different fairness criteria by modifying step 1. For example, when allocating only divisible goods, it is quite common to use the fractional allocation that maximizes the product of utilities, also known as the Max Nash Welfare allocation. If  such an allocation is used in step 1, instead of the egalitarian allocation,   then our algorithm yields an allocation in which each agent receives at least his value in some Max Nash Welfare allocation minus the absolute value of a single object.

\section{Preliminaries}
An instance of the fair division problem is represented by a matrix $U \in \mathbb{R}^{n \times m}$, where $n$ is the number of agents and $m$ is the number of objects. $u_{i o}$ refers to agent $i$'s valuation of object $o$, which can be considered as bad ($u_{i o} < 0$), good ($u_{i o} > 0$), or neutral ($u_{i o} = 0$).
We define $A$ to be the set of \emph{agents} and $O$ the set of \emph{objects} (or \emph{items}).

A \emph{discrete allocation} is a matrix $X \in \{0, 1\}^{n \times m}$ such that
$$\forall o \in [m]: ~~~~ \sum_{i=1}^{n} x_{i o} = 1,$$
It describes a solution to $U$ where each object is allocated to exactly one agent.
A \emph{fractional allocation} is a matrix $X \in  [0, 1]^{n \times m}$
with the same constraint. It describes a solution where splitting an object between two or more agents is possible.
Agent $i$'s utility under allocation $X$ is $u_i(X) = \sum_{o=1}^{m} u_{ij} \cdot x_{i o}$.

The \emph{consumption graph} of a fractional allocation $X$, denoted as $G_X$, is a bipartite graph defined as follows:
\begin{align*}
V(G_X) &= A\: \mathaccent\cdot\cup \:O \\
E(G_X) &= \{(i, o) \mid x_{i o} > 0\}
\end{align*}
Note that the set of agents and set of objects are the partition classes.

An allocation $Y$ \emph{weakly Pareto-dominates} an allocation $X$ if $u_i(Y)\geq u_i(X)$ for all agents $i$.
$Y$ \emph{strongly Pareto-dominates} $X$ if, in addition, $u_i(Y)> u_i(X)$ for at least one agent $i$. $Y$ is also said to be a \emph{Pareto improvement} over $X$ in that case.

An allocation $X$ is called \emph{fractionally Pareto-optimal  (fPO)} if no other allocation, whether discrete or fractional, strongly Pareto-dominates it.

\section{The Algorithm}
\subsection{Finding a fractional fair allocation}
\label{sub:find-fractional}
Finding a fractional fair allocation is simple. For example, a fractional egalitarian allocation $X$ can be computed
by the following linear program:
\begin{align*}
\tag{*}
\text{maximize}   && z
\\
\text{subject to} 
&&
z &\leq \sum_{o=1}^{m} x_{i o} u_{i o} && \text{for all $i\in A$}
\\
&& 
\sum_{i=1}^{n} x_{i o} &= 1
&& \text{for all $o\in O$}
\\
&& 
x_{i o} &\geq  0
&& \text{for all $o\in O, i\in A$}
\end{align*}

The allocation $X$ returned from (*) is not necessarily fPO, but it is simple to find an fPO allocation that is a (weak) Pareto-improvement of any given allocation $X$, using the following LP:

\begin{align*}
\tag{**}
\text{maximize}   
&& 
\sum_{i=1}^n
\sum_{o=1}^{m} 
y_{i o} u_{i o}
\\
\text{subject to} 
&&
\sum_{o=1}^{m} y_{i o} u_{i o} &\geq u_i(X) && \text{for all $i\in \{1,\ldots, n\}$}
\\
&& 
\sum_{i=1}^{n} y_{i o} &= 1
&& \text{for all $o\in O$}
\\
&& 
y_{i o} &\geq  0
&& \text{for all $o\in O, i\in A$}
\end{align*}

Note that in this LP, the allocation $X$ and the utilities $u_i(X)$ are constants. 
The first constraint guarantees that $Y$ is a weak Pareto-improvement of $X$, and the other two constraints guarantee that $Y$ is a legal allocation.
The LP is clearly feasible, since $X$ solves it.
The LP objective maximizes the sum of utilities of all agents; this guarantees that $Y$ is fPO.

\subsection{Ensuring that the consumption graph has no cycles}
\label{sub:make-forest}

The next step is to modify the allocation $X$ such that its consumption graph $G_X$ is acyclic. This should be done without harming any agent, that is, the modified allocation should be a weak Pareto-improvement of $X$.

There are two ways to do make $G_X$ acyclic: one way is to iteratively remove edges from $G_X$ until $G_X$ has no cycles. The other way is to use the Simplex method. 

\subsubsection{Removing cycles: iterative procedure}
We describe the iterative procedure first.%
\footnote{
A similar procedure appeared in \citet{bogomolnaia2017competitive}  and \citet{Erel2022}.
Their algorithm is more complex than ours, but it has a stronger guarantee: it guarantees that the final allocation is fPO even if the initial allocation is not.}
Initially, we look for individual edges that can be removed without harming any agent. 
In particular, if some object $o$ is split between two agents $i$ and $j$, such that $u_{i o}\leq 0$ and $u_{j o}\geq 0$, then $i$ can give all his holdings in $o$ to $j$ without harming any of them, thus removing the edge $\{i,o\}$ from $G_X$. 
Note that, if the original allocation is fPO, there is no object $o$ split between agents $i$ and $j$ s.t:  $u_{i o}\leq 0$ and $u_{j o}> 0$ or $u_{i o}< 0$ and $u_{j o}\geq 0$, so the only effect of this step is to remove edges with $u_{i o} =u_{j o} = 0$.

Once these ``easy'' edges are removed, if any object $o$ is still split between $i$ and $j$, then either both of them value it positively, or both of them value it negatively.
Therefore, every change in the allocation of $o$ between $i$ and $j$ harms one of these agents and helps the other one. 
In particular, suppose some fraction $\epsilon$ is moved from $i$ to $j$:
\begin{itemize}
    \item If both agents value $o$ positively, then $i$ loses $\epsilon\cdot u_{i o}$ utility, and $j$ gains  $\epsilon\cdot u_{j o}$ utility.
    \item If both agents value $o$ negatively, then $i$ gains $\epsilon\cdot |u_{i o}|$ utility, and $j$ loses $\epsilon\cdot |u_{j o}|$ utility.
\end{itemize}
If a change in the allocation of $o$ between $i$ and $j$ increases the utility of $j$ by some amount $u>0$, we say that we ``transfer $u$ utility to $j$ using $o$''; if $j$'s utility decreases by $u$, we say that we ``transfer $u$ utility from $j$ using $o$. In particular:
\begin{itemize}
    \item If both agents value $o$ positively, then to transfer $u$ utility to $j$, we should move $u/u_{j o}$ amount of $o$ from $i$ to $j$. In this case, $i$ loses  $u\cdot u_{i o}/u_{j o}$ utility.
    \item If both agents value $o$ negatively, then to transfer $u$ utility to $j$, we should move $u/|u_{j o}|$ amount of $o$ from $j$ to $i$. In this case, $i$ loses  $u\cdot |u_{i o}|/|u_{j o}|$ utility.
\end{itemize}
In both cases, $j$ gains $u$ utility and $i$ loses $u\cdot |u_{i o}|/|u_{j o}|$ utility.

Suppose now that there is a cycle in $G_X$. For simplicity, we consider a cycle with $3$ agents and $3$ objects; the reader will have no difficulty generalizing the idea to cycles of any length. Let us denote the agents by $a,b,c$ and the objects by $1,2,3$, such that the cycle is: $a-1-b-2-c-3-a$. We can transfer utility along the cycle in two directions: from $a$ to $b$ to $c$ to $a$, or from $a$ to $c$ to $b$ to $a$. In particular:
\begin{enumerate}
    \item In one direction, we transfer $u$ utility to $b$ using object 1; then $a$ loses $u\cdot |u_{a 1}|/|u_{b 1}|$ utility. We compensate $a$ by transferring $u\cdot |u_{a 1}|/|u_{b 1}|$ utility to $a$ using object 3; then  $c$ loses $u\cdot (|u_{a 1}|/|u_{b 1}|)\cdot (|u_{c 3}|/|u_{a,3}|)$ utility. We compensate $c$ by transferring the same amount of utility to $c$ using object 2; then $b$ loses $u\cdot \big[(|u_{a 1}|/|u_{b 1}|)\cdot (|u_{c 3}|/|u_{a,3}|)\cdot (|u_{b 2}|/|u_{c 2}|)\big]$ utility; denote the quantity in brackets by $r$. Overall, $a$ and $c$ remain with the same utility, whereas $b$'s utility changes by $u\cdot(1-r)$.
    \item In the opposite direction, we transfer $u$ utility from $b$ using object 1; then $a$ gains $u\cdot |u_{a 1}|/|u_{b 1}|$ utility. Now we transfer $u\cdot |u_{a 1}|/|u_{b 1}|$ utility from $a$ using object $3$; then $c$ gains $u\cdot (|u_{a 1}|/|u_{b 1}|)\cdot (|u_{c 3}|/|u_{a,3}|)$ utility. Finally, we transfer the same amount of utility from $c$ using object 2; then $b$ gains utility $u\cdot [(|u_{a 1}|/|u_{b 1}|)\cdot (|u_{c 3}|/|u_{a,3}|)\cdot (|u_{b 2}|/|u_{c 2}|)] = u\cdot r$. Overall, $a$ and $c$ remain with the same utility, whereas $b$'s utility changes by $u\cdot(r - 1)$.
\end{enumerate}
Clearly, either $u\cdot(1-r)$ or $u\cdot(r-1)$ or both are non-negative, so there is at least one direction in which the change in the utility of $b$ is non-negative. We choose one such direction; every transfer of utility in this direction is a weak Pareto-improvement. 

Now, we compute how much utility to move in the chosen direction. We have to compute $u$ such that, after the cyclic transfer, at least one edge in the cycle disappears. For each of the three transfers, we compute how much fraction of the relevant object should be moved. For example, to move $u$ utility from $a$ to $b$ using object $1$, we have to either 
take $u/|u_{b 1}|$ fraction from $a$ (if $1$ is a good), or take $u/|u_{b 1}|$ fraction from $b$ (if $1$ is a bad).
In the first case, we must have $u \leq x_{a 1}\cdot |u_{b 1}|$; in the second case, we must have $u \leq x_{b 1}\cdot |u_{b 1}|$.
Similarly, for each object in the cycle, we compute an upper bound on $u$. We choose $u$ to be the smallest of these upper bounds. This $u$ satisfies all the required inequalities, and at least one of them is satisfied with equality. Therefore, at least one edge is removed from $G_X$.

A pseudo-code of the cycle-removal process is given in Algorithm \ref{algo:ToForestTrade}.

\begin{algorithm}[H]
\SetAlgoLined
\SetKwInOut{Input}{Input}
\SetKwInOut{Output}{Output}
\Input{
A utility matrix $U$ and a fractional allocation $X$.}
\Output{
A fractional allocation $Y$ such that (1) $Y$ weakly Pareto-dominates $X$, (2) the  graph $G_Y$ is acyclic.}

$Y := X$.

\If{some $o$ is split between $i$ and $j$ s.t: $u_{i o} \geq 0$ and $u_{j o} \leq 0$}{ update $Y$ to allocate all of $j$'s share of $o$ to $i$.}

Let $G_Y$ be the consumption graph of $Y$.

\While{$G_Y$ is cyclic}{

$C \leftarrow \operatorname{FindCycle}(G_Y);$ // \CommentSty{E.g. using the BFS algorithm.}

$\operatorname{RemoveCycle(U, Y, C);}$ // \CommentSty{See description in text.}

}

\caption{ToForestTrade}
\label{algo:ToForestTrade}
\end{algorithm}

The discussion in the text before the algorithm proves the following theorem.
\begin{theorem}
For any input allocation $X$,
Algorithm \ref{algo:ToForestTrade} returns an output allocation $Y$ that weakly Pareto-dominates $X$, such that $G_Y$ has no cycles.
\end{theorem}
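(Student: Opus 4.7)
My plan is to verify each phase of the algorithm separately and then combine them by induction on $|E(G_Y)|$.

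First I would dispatch the preprocessing conditional: when some $o$ is split between $i$ and $j$ with $u_{io}\ge 0$ and $u_{jo}\le 0$, reassigning all of $j$'s share of $o$ to $i$ changes $i$'s utility by $x_{jo}\,u_{io}\ge 0$ and $j$'s utility by $-x_{jo}\,u_{jo}\ge 0$. So this step is a weak Pareto-improvement, deletes the edge $\{j,o\}$, and introduces no new edge. This is routine.

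The substantive step is to prove an invariant for each iteration of the \emph{while} loop: the subroutine $\operatorname{RemoveCycle}$ returns an allocation $Y'$ that weakly Pareto-dominates $Y$ and satisfies $|E(G_{Y'})|<|E(G_Y)|$. For a cycle $a_1\text{--}o_1\text{--}a_2\text{--}o_2\text{--}\cdots\text{--}a_k\text{--}o_k\text{--}a_1$ of arbitrary length (indices mod $k$), I would generalize the three-agent ratio from the text to
\[
r \;:=\; \prod_{t=1}^{k}\frac{|u_{a_t,\,o_t}|}{|u_{a_{t+1},\,o_t}|}
\]
and inductively verify, by iterating the ``transfer $u$ utility then compensate the next agent'' construction around the cycle, that pushing $u$ units to $a_2$ in one direction leaves every $a_t$ with its original utility except $a_1$, whose utility changes by $u(1-r)$; in the opposite direction $a_1$'s utility changes by $u(r-1)$ instead. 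Since $(1-r)+(r-1)=0$, at least one of these quantities is non-negative, so at least one direction gives a weak Pareto-improvement. I expect this cyclic bookkeeping to be the main obstacle, since it is the only place a sign error or off-by-one could creep in, and one must check uniformly the four sub-cases depending on whether each $o_t$ is valued positively or negatively by its two neighbors.

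For the step size I would observe that each of the $2k$ transfers around the cycle imposes an upper bound on $u$ of the form $x_{a,o}\cdot|u_{a,o}|>0$; taking $u$ to be the minimum of these finitely many positive bounds keeps all allocation variables in $[0,1]$ and drives at least one $x_{a,o}$ to zero, removing at least one edge of the cycle. No new edges arise because each transfer only redistributes mass between the two endpoints of an existing edge. These pieces then combine trivially: $|E(G_Y)|\le nm$ initially and strictly decreases in each loop iteration, so the loop halts in at most $nm$ steps; at termination $G_Y$ is acyclic by the loop condition, and the weak Pareto-dominance invariant, preserved by the preprocessing step and by each iteration, gives $u_i(Y)\ge u_i(X)$ for every agent $i$.
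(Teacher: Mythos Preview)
Your proposal is correct and follows essentially the same approach as the paper, which simply cites the preceding discussion as the proof; you have formalized that discussion (preprocessing step, cyclic utility transfer with ratio $r$, choice of direction by sign of $1-r$, choice of $u$ as the tightest feasibility bound, termination by edge-count decrease) and carried out the generalization to cycles of arbitrary length that the paper leaves to the reader. Two cosmetic points to tidy up: there are $k$ transfers (one per object on the cycle), not $2k$, and the feasibility bound mixes indices---it has the form $x_{a,o}\cdot|u_{a',o}|$ where $a$ is the agent whose share shrinks and $a'$ is the agent whose utility scale defines the transfer---but neither affects the argument.
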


Note that, if the input allocation $X$ is fPO, then the output allocation $Y$ is fPO too, since it is a weak Pareto-improvement of $X$.

We now analyze the runtime Algorithm \ref{algo:ToForestTrade}.
Denote  $E_X = |\{(i, o) \mid x_{i o} > 0\}|$, i.e. how dense the original fractional allocation matrix is.
This is exactly the number of edges in the consumption graph $G_X$. Note that $E_X \leq m \cdot n$.
The construction of $G_X$ takes $O(E_X)$ time.

The initial process of removing the ``easy'' edges takes at most 
$O(E_X)$.

Finding a cycle in $G$ can be done in $O(E_X)$ time, e.g. using BFS.
\emph{RemoveCycle} takes $O(\text{length of cycle}) = O(E_X)$ time.
For this reason, we believe BFS may be practically preferred to DFS for finding a cycle, as it finds shorter cycles first, and thus reduces the run-time of \emph{RemoveCycle}.

Since at least one edge is removed at each iteration, the number of iterations is  bounded by the number of edges $= O(E_X)$.
So the worst case run-time is $O(E_X^2)$
.

\subsubsection{Removing cycles: Simplex-based procedure}
There is another way to make $G_X$ acyclic, which is simpler to implement in practice; it uses the linear program (**) from subsection \ref{sub:find-fractional}.

\begin{lemma}
If the linear program (**) is solved using the Simplex method, then the returned allocation $Y$ has an acyclic consumption graph.%
\footnote{
The following lemma is mentioned briefly in \citet{Aziz2020}. We are grateful to Fedor Sandomirskiy for explaining the proof to us.
}
\end{lemma}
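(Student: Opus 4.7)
The plan is to argue by a perturbation along the purported cycle. The Simplex method returns a basic feasible solution of (**), which is a vertex of the feasible polytope. I will show that if the consumption graph of an optimal vertex contained a cycle, one could perturb along that cycle in a way that either produces a non-trivial line segment through $Y$ (contradicting extremality) or strictly improves the objective (contradicting optimality).

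Suppose toward contradiction that $G_Y$ contains a cycle $C = i_1 - o_1 - i_2 - o_2 - \cdots - i_k - o_k - i_1$. For real numbers $\delta_1,\ldots,\delta_k$, consider the perturbation $d$ that moves $\delta_j$ units of $o_j$ from $i_j$ to $i_{j+1}$ (indices mod $k$). Since each cycle object $o_j$ is incident to exactly the two cycle edges, $d$ preserves the equality constraint on $o_j$, and agent $i_j$'s utility changes by $\Delta u_{i_j} = \delta_{j-1} u_{i_j,o_{j-1}} - \delta_j u_{i_j,o_j}$. The only requirement for $Y + \epsilon d$ to remain feasible at small $\epsilon > 0$ is that $\Delta u_{i_j} \geq 0$ at every \emph{tight} cycle agent (one with $u_i(Y) = u_i(X)$).

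If some cycle agent $i^*$ has strictly positive slack, I would propagate $\delta_{j+1} = \delta_j \cdot u_{i_{j+1},o_j}/u_{i_{j+1},o_{j+1}}$ through the remaining agents so that every tight cycle agent gets $\Delta u = 0$; the cyclic residual is harmlessly absorbed by $i^*$'s slack. This yields a non-zero $d$ with both $Y + \epsilon d$ and $Y - \epsilon d$ feasible, contradicting $Y$'s extremality. If instead every cycle agent is tight, propagating the same recurrence all the way around gives the consistency condition $\delta_1 = \delta_1 \cdot P$, where $P := \prod_{j=1}^k u_{i_j,o_{j-1}}/u_{i_j,o_j}$. When $P = 1$, a balanced non-zero $\delta$ exists and again both $Y \pm \epsilon d$ are feasible. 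When $P \neq 1$, any non-trivial propagated direction has $\Delta u_{i_1} = \delta_1 u_{i_1,o_1}(P-1) \neq 0$ while all other $\Delta u_{i_j} = 0$; choosing the sign of $\delta_1$ so that $\Delta u_{i_1} > 0$ yields a feasible direction that raises the objective by exactly $\Delta u_{i_1}$, contradicting the optimality of $Y$.

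The main obstacle is this last sub-case ($P \neq 1$ with every cycle agent tight), because it is the only configuration in which a cyclic consumption graph genuinely corresponds to a vertex of the polytope; handling it requires invoking the optimality of Simplex's output rather than merely its extremality. A subsidiary technicality is the edge case where some cycle utility $u_{i_j,o_j}$ vanishes, which breaks the propagation formula; in that situation the $\Delta u = 0$ equations directly admit a non-trivial direction with two free parameters, again placing $Y$ inside a feasible line segment and yielding the same contradiction.
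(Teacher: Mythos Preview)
Your argument is correct and is essentially the paper's proof: perturb along the cycle so that all but one agent's utility is held fixed, invoke extremality of the Simplex output when the residual product equals $1$, and invoke optimality (the paper phrases this as a contradiction to fPO of $Y$) when it does not. The paper does not split off your separate case of a slack cycle agent---its Pareto-dominance formulation handles tight and slack agents uniformly---so your case analysis and your treatment of vanishing $u_{i_j,o_j}$ are sound but slightly more elaborate than needed.
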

\begin{proof}
It is well-known that the Simplex method returns a solution that is a \emph{corner} (an extreme point) of the space of all feasible solutions. In other words, it returns a solution that is not an average of two different solutions. We use this fact to prove that $G_Y$ has no cycles.
 
Suppose by contradiction that $G_Y$ has a cycle; for simplicity, denote the cycle by $a - 1 - b - 2 - c - 3 - a$ (as above, the reader should have no difficulty in generalizing the argument to cycles of any length). As explained in the beginning of this subsection, it is possible to transfer utility along the cycle in two opposite directions, such that agents $a$ and $c$ remain with the same utility. In one direction, the utility of $b$ changes by $(1-r)u$; in the opposite direction, it changes by $(r-1)u$. Let $Y'$ and $Y''$ respectively be the resulting allocations.

If $r<1$, then $Y'$ strongly Pareto-dominates $Y$;
if $r>1$, then $Y''$ strongly Pareto-dominates $Y$; 
both cases contradict the fact that $Y$ is fPO.
Therefore, $r = 1$, so the utilities of \emph{all} agents are the same in all three allocations $Y,Y',Y''$.
This means, in particular, that all three allocations satisfy the constraints of LP (**). But, by construction, $Y$ is an average of $Y'$ and $Y''$; this contradicts the fact that $Y$ is an extreme point  of (**).
\end{proof}

\subsection{Rounding the fractional allocation}
\label{sub:round-allocation}

The final step is to 'round' the fractional allocation $X$ into a discrete allocation, by giving each object to a single agent among those who own parts of it in $X$.
Algorithm \ref{algo:rounding} is our simplified rounding algorithm.

\begin{algorithm}
\SetAlgoLined
\SetKwInOut{Input}{Input}
\SetKwInOut{Output}{Output}
\Input{
A utility matrix $U$
and a fractional allocation $Y$, such that:
(1) $G_Y$ has no cycles,
(2) Every object is split only between agents who view it as good, or only between agents who view it as bad.
}
\Output{
A discrete allocation $Z$.
}

Initialize $Z := Y$ and $G := G_Y$.

\While{$G$ is non-empty}{

Let $l$ be any leaf of $G$ (it must exist since $G$ is acyclic);

\uIf{leaf $l$ is an object $o$}{
    Let $i$ be the only neighbor of $o$ (who must be an agent);
    
    Change $Z$ to allocate $o$ entirely to $i$.
}
\uElseIf{leaf $l$ is an agent $i$
and its neighbor $o$ has a neighbor $k \neq i$ for whom $u_{k, o} > 0$} {
Choose an arbitrary such neighbor $k$;  change $Z$ by moving all the share of agent $i$ in $o$ to $k$.
\label{step:good}
}
\uElse{  
// \CommentSty{\small All neighbors $k\neq i$ of $o$, if any, perceive $o$ as bad.}

Change $Z$ to allocate $o$ entirely to $i$.
\label{step:bad}
}

Remove leaf $l$ from the forest $G$.
\label{step:remove-leaf}
}

\caption{Rounding a Fractional Allocation}
\label{algo:rounding}
\end{algorithm}

We now analyze the properties of the algorithm.

\subsubsection{Economic efficiency}
\begin{theorem}
Algorithm \ref{algo:rounding} preserves fPO, that is: if the input allocation is fPO, then the output allocation is fPO too.
\end{theorem}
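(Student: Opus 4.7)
The plan is to use the weighted-utilitarian characterization of fractional Pareto-optimality: an additive allocation $Y$ is fPO if and only if there exists a strictly positive weight vector $\lambda=(\lambda_1,\ldots,\lambda_n)$ such that for every object $o$, every agent $i$ with $Y_{io}>0$ attains $\max_j \lambda_j u_{jo}$. This characterization follows from a standard supporting-hyperplane/LP-duality argument on the convex polytope of fractional allocations. My strategy is to fix such weights $\lambda$ witnessing that the input $Y$ is fPO, and then show that the \emph{same} $\lambda$ also supports the output $Z$.

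The main work is to establish, by induction on the iterations of the while-loop, the following invariant: for every object $o$, the set of agents $i$ with $Z_{io}>0$ in the current $Z$ is a subset of $\{i : Y_{io}>0\}$, i.e., of $o$'s neighbors in $G_Y$. The base case is immediate since $Z=Y$ initially. For the inductive step I inspect each of the three cases of the algorithm. In every case, the only agent whose share of $o$ can be newly increased is a current $G$-neighbor of $o$; since $G$ is obtained from $G_Y$ by repeatedly deleting leaves (together with their incident edges), this agent is also a neighbor of $o$ in $G_Y$, hence an original holder of $o$ in $Y$. Operations that zero out some $Z_{io}$ only shrink the holder set, so the invariant is preserved.

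Once the invariant is established the conclusion is immediate. Every holder $i$ of every object $o$ in the final $Z$ is also a holder of $o$ in $Y$, and hence attains $\max_j \lambda_j u_{jo}$ by the supporting-weight property of $Y$. Therefore $Z$ is itself a maximizer of the weighted utilitarian welfare $\sum_i \lambda_i u_i(\cdot)$, which means $\lambda$ supports $Z$, so $Z$ is fPO.

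The step I expect to require the most care is the inductive argument for case 3: after giving $o$ entirely to $i$ and removing $i$ from $G$, the object $o$ remains in $G$ with neighbors $k$ whose $Z_{ko}$ has just been set to zero. Such a $k$ may later become a leaf whose processing reallocates $o$ yet again, potentially taking it away from $i$. The key point is that any such subsequent reallocation still chooses a recipient from the current $G$-neighborhood of $o$, and those neighbors are a subset of $G_Y$'s neighbors of $o$; so the invariant continues to hold without any case-by-case utility calculation, and the weighted-welfare argument handles all three cases uniformly, irrespective of whether $o$ is a good or a chore for the agents involved.
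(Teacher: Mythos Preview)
Your proof is correct and is essentially the same as the paper's: both invoke the weighted-utilitarian characterization of fPO (the paper cites Varian's lemma that $Y$ is fPO iff it maximizes $\sum_i \lambda_i u_i$ for some strictly positive $\lambda$), and both observe that the algorithm only reassigns each object among agents who already held it in $Y$, so the same weight vector $\lambda$ supports the output $Z$. Your per-object restatement of the characterization and your explicit inductive verification of the ``holders only shrink'' invariant (including the case-3 subtlety you flag) are more careful than the paper's one-line assertion that ``Algorithm~\ref{algo:rounding} only reallocates an object $o$ between neighbours of $o$ in $G_Y$'', but the underlying argument is identical.
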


\begin{proof}
The proof is very similar to Lemma 2 of \citet{Aziz2020}. It uses the following lemma by \citet{varian1976two}:

\begin{lemma}
\label{lem:fpo1}
A (fractional) allocation $X$ 
is fPO 
if and only if 
it maximizes a weighted sum of utilities:
$\sum_{j=1}^{m} u_i(X) \lambda_i$,
for some strictly-positive vector $\lambda \in \mathbb{R}_+^m$.
\end{lemma}
Suppose the input allocation $Y$ is fPO, and let $\lambda$ be the vector guaranteed by Lemma \ref{lem:fpo1}.

Suppose in $Y$ some object $o$ is split between a subset of agents $S \subseteq A$.
Suppose we move a fraction of $o$ from some agent $a_i\in S$ and another agent $a_j\in S$.
By \ref{lem:fpo1}, this transfer cannot increase the weighted sum 
$\sum_{j=1}^{m} \lambda_i\cdot u_i(Y)$.
Therefore, transferring a fraction of $o$ in the other direction cannot \emph{decrease} the sum.
In other words, every transfer of parts of $o$ among agents of $S$ maintains the sum 
$\sum_{j=1}^{m} \lambda_i\cdot u_i(Y)$.

Since Algorithm \ref{algo:rounding} only reallocates an object $o$ between neighbours of $o$ in $G_Y$, the output allocation $Z$ still maximizes the sum $\sum_{j=1}^{m} \lambda_i\cdot u_i(Z)$, with the same vector $\lambda$. Therefore, by Lemma \ref{lem:fpo1}, $Z$ is fPO.
\end{proof}

\subsubsection{Fairness up to one object}

\begin{theorem}
For any fractional allocation $Y$ input to Algorithm \ref{algo:rounding}, 
the output $Z$ is a valid discrete allocation satisfying, for every agent $i$,
\begin{align*}
    u_i(Z) \geq u_i(Y) - \max_{o \in [m]} |u_{i o}|,
\end{align*}
that is, the utility of $i$ in the returned allocation is at least $i$'s utility in the original allocation minus the absolute value of a single object.
\end{theorem}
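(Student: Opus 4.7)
The plan is to fix an agent $i$ and track every operation of Algorithm~\ref{algo:rounding} that can alter $u_i$, arguing that only a single operation can decrease $u_i$, with magnitude at most $\max_o |u_{i,o}|$. Supporting this is the invariant that the graph $G$ stays synchronized with the current allocation, so $(j, o) \in G$ iff $Z_{j, o} > 0$; in particular case 1 is a no-op on $Z$ (a leaf object already has all its mass on its one remaining neighbour, by $\sum_j Z_{j, o} = 1$), so only cases 2 and 3 actually move mass. This invariant also gives a clean proof that $Z$ is a valid discrete allocation at termination, and that the rounding precondition on split objects is preserved throughout execution.

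Next I would classify every event that changes $Z_{i, o}$ for some $o$. There are two sources of gain for $i$: case 2 applied to a leaf $j \neq i$ that selects $i$ as the beneficiary $k$ increases $Z_{i, o}$ with $u_{i, o} > 0$ (by the case 2 hypothesis, combined with the precondition that all current owners of $o$ share the sign of valuation); case 3 applied to a leaf $j \neq i$ zeroes $Z_{i, o}$ with $u_{i, o} < 0$ (again by the precondition), which raises $u_i$. Losses are confined to the moment at which $i$ is itself the leaf being processed, and come in two mutually exclusive forms: case 2 on $i$ zeroes $Z_{i, o_i^*}$ with $u_{i, o_i^*} > 0$, costing $Z_{i, o_i^*}^{\mathrm{old}} \cdot u_{i, o_i^*} \leq u_{i, o_i^*}$; case 3 on $i$ raises $Z_{i, o_i^*}$ to $1$ with $u_{i, o_i^*} < 0$, costing $(1 - Z_{i, o_i^*}^{\mathrm{old}}) \cdot |u_{i, o_i^*}| \leq |u_{i, o_i^*}|$. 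Here $o_i^*$ denotes $i$'s unique remaining neighbour at the moment $i$ is pruned; no other operation touches $Z_{i, \cdot}$.

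Since $i$ is pruned from $G$ at most once (and not at all if it first becomes isolated, in which case no loss event fires for $i$), at most one of the two loss bullets occurs, so the total loss is at most $|u_{i, o_i^*}| \leq \max_o |u_{i, o}|$, giving the claimed inequality $u_i(Z) \geq u_i(Y) - \max_o |u_{i, o}|$. The main subtlety I expect is the graph/allocation bookkeeping, specifically verifying that once case 3 concentrates $o$ onto $i$ no later event can reopen $o$'s allocation and push a bad back onto a previously-unharmed agent: this follows from the invariant, since after case 3 every other share of $o$ is zero and those edges leave $G$ together with $i$'s pruning, so $o$ becomes isolated and is dropped without further action. Once the bookkeeping is settled, the signed-contribution tally delivers the bound directly.
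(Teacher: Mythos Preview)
Your overall strategy matches the paper's: fix an agent $i$, show that every change to $Z$ prior to $i$'s own removal can only raise $u_i$, and then bound the single drop at $i$'s removal by $\max_o|u_{i,o}|$. The paper phrases this as an inductive invariant (``$u_i(Z)\ge u_i(Y)$ until $i$ leaves the graph'') while you catalogue gain and loss events directly; the content is the same.

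However, the step you yourself flag as ``the main subtlety'' is resolved incorrectly---and, to be fair, the identical slip is present in the paper's own argument. Your invariant $(j,o)\in G \Leftrightarrow Z_{j,o}>0$ is \emph{not} preserved by the algorithm as written: in case~3, allocating $o$ entirely to the leaf agent $i$ zeroes $Z_{k,o}$ for every other neighbour $k$ of $o$, but step~\ref{step:remove-leaf} removes only the vertex $i$, leaving the dead edges $(k,o)$ in $G$. Your claim that ``those edges leave $G$ together with $i$'s pruning'' is therefore false. Concretely, take the path $a\text{--}o_1\text{--}b\text{--}o_2\text{--}c$ with all valuations $-10$, $y_{a,o_1}=y_{c,o_2}=0.99$ and $y_{b,o_1}=y_{b,o_2}=0.01$. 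Processing leaves in the order $a,c,o_1,o_2$ (each choice legal) first gives $o_1$ to $a$ and $o_2$ to $c$ via case~3, leaving $o_1\text{--}b\text{--}o_2$ in $G$ with $Z_{b,o_1}=Z_{b,o_2}=0$; then case~1 hands both chores to $b$, so $u_b(Z)=-20$ while $u_b(Y)=-0.2$, violating the stated bound. The paper's assertion that in case~1 ``we must have $z_{oj}=1$'' fails for exactly the same reason. The fix is easy: in case~3 also delete the edges $(k,o)$ for $k\neq i$ (equivalently, keep $G$ equal to $G_Z$ restricted to surviving vertices); once that is done, both your event tally and the paper's invariant go through cleanly.
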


\begin{proof}
we first prove by induction that the algorithm satisfies the following invariant:

\begin{quote}
For every agent $i$, in every iteration before the one in which $i$ leaves the graph, $u_i(Z) \geq u_i(Y)$.  
\end{quote}

\emph{Basis}:
At the 0th iteration (before any leaf is chosen), $Z = Y$, so the invariant is met.

\emph{Step:} 
We assume the invariant holds  before iteration $t \geq 1$, and prove that they hold after iteration $t$.

If the leaf chosen at iteration $t$ is some object $o$, then we must have $z_{o j}=1$, where $j$ is the only neighbor of $o$. Therefore, allocating it entirely to $j$ does not change the utility of any agent, so the invariant is still met.

Suppose the leaf chosen at iteration $t$ is some agent $j\neq i$, and let object $o$ be the only neighbor of $j$.
If $i$ does not have any share of $o$, then he will not be affected by the removal of $j$.
If $i$ has a share of $o$ and values it positively, then in step \ref{step:good}, $i$ either gets more shares of $o$, or does not get anything; 
if $i$ values $o$ negatively, then 
in step \ref{step:bad}, he might 'lose' his share in $o$.
In any case, $u_i(Z)\geq u_i(Y)$ still holds.
This concludes the proof of the invariant.



Now, consider the step in which $i$ is removed from the graph. 
In step \ref{step:good}, $i$ loses his share in a good object $o$; in step \ref{step:bad}, $i$ gains shares in a bad object $o$. In both cases, his utility decreases, but the decrease is at most $|u_{i o}|$.
After that, $i$ is removed from the graph and will not lose any more utility.
Therefore, by the invariant, 
$u_i(Z) \geq u_i(Y) - \max_{o \in [m]} |u_{i o}|$.

Each object is removed from $G$ only when it is allocated to a single agent. Therefore, the output $Z$ is a valid discrete allocation.
\end{proof}

\subsubsection{Run time of Algorithm \ref{algo:rounding}}
The construction of $G_Y$ takes $O(E_Y)$ time. Note that, since $Y$ is acyclic, $E_Y$ is at most the number vertices minus 1, that is, $E_Y\leq m+n-1$.

In each iteration one vertex is removed from $G$, so there are at most $m+n$ iterations.

In each iteration, we have to find a leaf. We can do this efficiently by constructing a set $L$ of all leafs of $G_Y$, and updating $L$ in each iteration. Note that, when a leaf is removed, the only possible new leafs are its neighbors. Every node enters $L$ and exits $L$ at most once. Therefore, the total run-time of all iterations is $O(m+n)$.

\subsection{Practical improvements}
We have seen that the running time of the algorithm is dominated by the first step (finding a fractional allocation) or the second step (\emph{ToForestTrade}).



One way to make \emph{ToForestTrade} faster is to apply parallel computing, in case the consumption graph of the intermediate allocation $G_Y$  turns out to be disconnected.
Note that, in this case, each connected component of $G_Y$ contains a complete fractional allocation of a subset of the objects among a subset of the agents. 
Therefore, we may apply Divide \& Conquer by rerunning \emph{ToForestTrade} 
on the allocation of each connected component separately.  
At the end of the recursion depth, the consumption graph is always connected. At each call the input gets smaller too.
This idea can be implemented via truly parallel computation of each call.

There might be other ways to practically improve the run-time, which we leave as open questions:

\begin{itemize}
\item 
Are there any ways to reduce the number of iterations in \emph{ToForestTrade} (Algorithm \ref{algo:ToForestTrade})?
\item
Is there a way to decide efficiently in advance (before running the linear programs of Section \ref{sub:find-fractional}), whether a given instance will be split into two independent instances?
\end{itemize}

\section{Related Work}
\label{sec:related}
\cite{lenstra1990} studied the problem of job scheduling on unrelated parallel machines. In this problem, the input is a list of jobs and a list of machines, such that each job can have a different run-time on each machine. The goal is to minimize the maximum completion time of a machine. It is easy to see that the problem is equivalent to egalitarian allocation of objects with negative utilities. Before their work, most of the results regarding this problem were unpractical or restricted to a weaker version of the problem. They showed a 2-approximation, strongly polynomial-time algorithm.
Their paper introduced the linear programming rounding approach.
They continued by carefully rounding the solution. The tool that enabled this scaffolding is a what is now known as the consumption graph.
The nature of the linear program yielded a consumption graph that is a pseudo-forest - a forest with possibly one additional edge. This lead to an algorithm that is more complex than ours, since they had to handle both leafs and cycles, rather than just leafs.
Their technique was later adopted and used throughout the literature to explore other variants of fair allocation.
In particular,  this technique was used for the problem of max-min (egalitarian) allocation of indivisible \emph{goods}.

\cite{Golovin20005} gave an algorithm that can deliver at least $\frac{1}{k}$ the utility
of the optimal solution for $1 - \frac{1}{k}$ fraction of the agents for any integer $k$, and a $O( \sqrt{n} )$ approximation of the max-min value for a special case with only two classes of goods.
\citet{Ivona20005} demonstrated several algorithms for approximate egalitarian object allocation for positive additive or maximal utilities.
One of them appealed to the rounding approach, in that case the consumption graph turned out to always be a pseudo tree.
They changed the rounding procedure by a bit to get an additive
approximation guarantee against the fractional optimum.

\cite{Arash20007}
gave an approximation algorithm for egalitarian goods allocation, with an approximation factor of $\Omega ( \frac{1}{\sqrt{n} \log^3 n} )$.

Beyond egalitarian fairness:

 \cite{Aziz2020} gave yet another algorithm that implemented the rounding approach.
It was building upon an algorithm by \cite{Erel2022} that gave a fractional solution that is proportional and fPO with a forest consumption graph.
Their algorithm also augmented the rounding procedure in a way that allowed it to deal with mixed utilities and give a fPO and almost-proportional guarantee. Their rounding algorithm is more complex than ours, using two nested loops and a queue, rather than just a single flat loop.

All of the aforementioned contributions solve similar versions of the same problem at different levels of difficulty with no consistency in the rounding stage, opting for different fairness objectives.

Our Algorithm serves as a universal generic method. It can take any input (mixed additive utilities without any extra stipulations).
Its main contribution is that it performs the rounding stage in a simpler way than previous algorithms, without charging any effect on the guarantees mentioned above.

Moreover, thanks to the procedural generic design of the algorithm we were able to point out ways to improve practical implementations.

\newpage
\bibliography{bibliography}

\end{document}